\DeclareMathOperator{\tr}{tr}
\newcommand{\id}{\mathds{1}}
\newcommand{\ee}{\mathrm{e}}             
\newcommand{\ii}{\mathrm{i}}             
\newcommand{\bra}[1]{\left\langle #1 \right|}
\newcommand{\ket}[1]{\left| #1 \right\rangle}
\newcommand{\ketbra}[2]{\left|#1\middle\rangle\middle\langle#2\right|}
\newcommand{\proj}[1]{\left[#1\right]}
\newcommand{\Bra}[1]{{ \langle \! \langle{#1}\vert }}
\newcommand{\Ket}[1]{{ \vert {#1}  \rangle \!  \rangle}}
\newcommand{\KetBra}[1]{{\Ket{#1}\!\Bra{#1} }}
\newcommand{\Proj}[1]{[[#1]]}
\newcommand{\reftoeq}[1]{Eq.~\eqref{#1}}
\newtheorem{theorem}{Theorem}
\newtheorem{lemma}[theorem]{Lemma}
\newtheorem*{lemma*}{Lemma}
\newtheorem{definition}[theorem]{Definition}
\begin{document}

\title{A no-go theorem for superpositions of causal orders}
\author{Fabio Costa}
\email{f.costa@uq.edu.au}
\affiliation{Centre for Engineered Quantum Systems, School of Mathematics and Physics, The University of Queensland, St Lucia, QLD 4072, Australia}
\begin{abstract}
The causal order of events need not be fixed: whether a bus arrives before or after another at a certain stop can depend on other variables---like traffic. Coherent quantum control of causal order is possible too and is a useful resource for several tasks. However, quantum control implies that a controlling system carries the which-order information---if the control is traced out, the order of events remains in a probabilistic mixture. Can the order of two events be in a pure superposition, uncorrelated with any other system? Here we show that this is not possible for a broad class of processes: a pure superposition of any pair of Markovian, unitary processes with equal local dimensions and different causal orders is not a valid process, namely it results in non-normalised probabilities when probed with certain operations. The result imposes constraints on novel resources for quantum information processing and on possible processes in a theory of quantum gravity.
\end{abstract}
\maketitle

\section{Introduction}
Quantum superpositions can be viewed as generalisations of classical probabilities: if classically we can be uncertain between two alternatives, assigning to each a probability, quantumly we should be able to consider a superposition of the two, replacing probabilities with complex amplitudes. Feynman's sum-over-histories approach famously leverages this intuition \cite{Feynman1948}.
Despite this view, one typically considers superpositions of \emph{states}, whereas classical probabilities can be assigned to \emph{any} logical statement. To what extent is it possible to generalise the superposition principle, beyond its original range of applicability? Providing a general, principled answer to this question appears problematic because, unlike classical probabilities, quantum superpositions have no clear interpretation in terms of subjective lack of knowledge.

Of particular interest is the case of causal relations. A main motivation comes from combining the principles of quantum theory and general relativity. In a regime where spacetime itself becomes subject to the laws of quantum mechanics, we expect that causal structure, too, will become nonclassical~\cite{Butterfield2001, hardy2007towards, Zych2019}. A second, more practical, motivation is that quantum causal structures can also be realised experimentally (on a classical spacetime background)~\cite{Procopio2015, Rubinoe1602589, rubino2017experimental, Goswami2018, goswami2018communicating, guo2018experimental, Wei2019, taddei2020experimental} possibly leading to an advantage in solving computational and information-theoretic tasks \cite{chiribella09, chiribella12, colnaghi11, araujo14, feixquantum2015, Guerin2016, Ebler2018, Salek2018, Gupta2019}.

\begin{figure}[t]%
\begin{center}
\includegraphics[width=0.9\columnwidth]{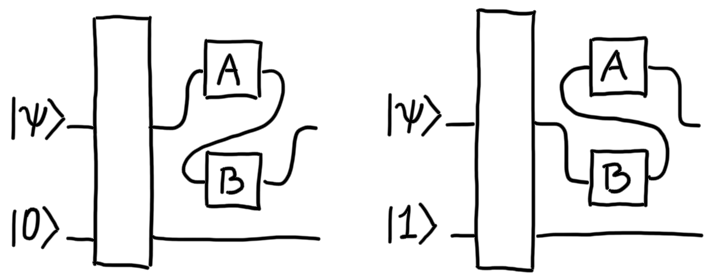}%
\end{center}
\caption{\textbf{The quantum switch.} The state of a control system determines the order of two operations on a target system. 
}%
\label{switch}%
\end{figure}

Despite much recent research, the scenarios considered so far do not support a direct interpretation as superpositions of causal orders. The most discussed example is the so-called ``quantum switch''~\cite{chiribella09}, Fig.~\ref{switch},  where a control system determines the order in which a set of operations act on a target. Preparing the control in a superposition produces a nonclassical causal structure for the order of operations. Although sometimes colloquially described as a ``superposition of causal orders'', the quantum switch is in fact more appropriately interpreted as `entanglement' between causal relations and the control system. Indeed, if the control is discarded, the quantum switch is indistinguishable from a probabilistic mixture of causal orders. Only by measuring the control can one verify the nonclassicality of the order of operations. Probabilistic protocols, where one obtains a superposition of orders through postselection, generally succeed with probability less than one, making them not commensurable with deterministic resources.

\begin{figure}%
\includegraphics[width=0.9\columnwidth]{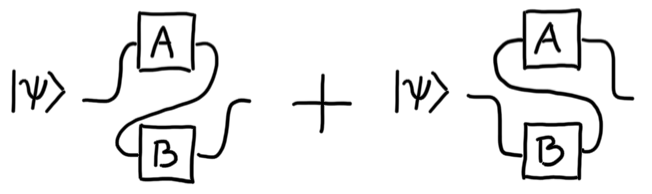}%
\caption{\textbf{Pure superposition of causal orders.} Is it possible to superpose the causal order of two events, without the aid of a control system?}%
\label{puresuper}%
\end{figure}

Is a `pure', deterministic superposition of causal orders possible (Fig.~\ref{puresuper}), not relying on any additional control system? Perhaps, this would provide a novel resource, enabling tasks not achievable by the quantum switch. It might also suggest types of processes arising in a theory of quantum gravity.

Here, we propose a definition of superpositions of causal orders and present a negative result, ruling out the most natural candidates.
We show that for unitary, Markovian processes where all systems have equal dimension superpositions of two different causal orders are not possible, in the sense that they do not constitute valid, deterministic processes. In particular, the result applies to simple sequences of operations on a target system---as in the switch---showing that an additional system is necessary for performing such operations in an indefinite causal order. It remains an open question whether a superposition is possible under weaker conditions.

\section{Superpositions of unitaries}
Before considering superpositions of different causal orders, it is useful to look at a simpler case: superpositions of unitary transformations connecting an event in the past---a state preparation---to one in the future---a measurement. Given two unitaries $U_j$, $j=0,1$, it is natural to define their superposition as a linear combination $\alpha U_0 + \beta U_1$ for some complex numbers $\alpha$, $\beta$, interpreted as probability amplitudes. However, such a `superposition' is not necessarily a unitary operator. 
For example, no linear combination of $U_0=\id$ and $U_1=\frac{\sigma_x + \ii \sigma_y}{\sqrt{2}}$ (where $\sigma_x$, $\sigma_y$, $\sigma_z$ denote the Pauli matrices) is unitary, for \emph{any} complex amplitudes $\alpha\neq 0$, $\beta\neq 0$.

One could interpret arbitrary linear combinations of unitaries in the following way \cite{Aharonov1990}: Introduce the `controlled unitary' $U_0\otimes \proj{0} + U_1\otimes \proj{1}$, which acts on an additional `control' system and where we use the shorthand $\proj{\psi}=\ketbra{\psi}{\psi}$. Starting from a state of the form $\ket{\psi}\left(\alpha\ket{0}+\beta\ket{1}\right)$, and postselecting the control on $\left(\bra{0}+\bra{1}\right)/\sqrt{2}$, the target system is left into $\left(\alpha U_0 + \beta U_1\right)\ket{\psi}/\sqrt{2}$, which one can interpret as resulting from a superposition of unitaries. The problem with this interpretation is that the postselection step requires obtaining one out of a set of possible measurement outcomes, which in general only succeeds with some probability that depends on the initial state. As such a probability can be very small, or even vanish, realising a process through postselection can nullify the associated advantages. Furthermore, postselection can induce apparent signalling, making postselected processes unsuitable for studies of causal relations\footnote{For example, one can place bets on a roulette and consider only the winning spins, effectively predicting the roulette's outcome in each of the postselected spins. This is unlikely to be a remunerative strategy in any casino.}. For these reasons, here we are only interested in processes that can be realised \emph{without} postselection.

We see that, for processes, superposition is not a universal possibility, as it is for states. However, there are unitaries for which superpositions \emph{are} possible. For example, the \emph{Hadamard gate} $H:=\frac{\sigma_x+  \sigma_z}{\sqrt{2}}$ is a rightful superposition of $\sigma_x$ and $\sigma_z$, not requiring any postselection. $H$ is a ubiquitous resource in quantum information and computation, precisely for its ability to convert basis states into superpositions. Therefore, even if we do not expect to be able to superpose arbitrary processes with different causal orders, we can still ask whether \emph{any} such a superposition is possible.

\section{General quantum processes}\label{processsection}
As we will see, superpositions of causal orders become meaningful for processes connecting more than two events. These can be described conveniently in the so-called ``process matrix formalism'' \cite{oreshkov12} (see also other closely related frameworks \cite{Oeckl2003318, Aharonov2009, Cotler2017, Silva2017, Barrett2019}). A simple example of a multi-event process is one where a sequence of unitaries $U_1,U_2,\dots$ connects a set of events at times $t_1,t_2,\dots$  At each time $t_j$, after unitary $U_{j-1}$ and before unitary $U_j$, it is in principle possible to perform arbitrary measurements or transformations on the system. The particular operation taking place at a given time is what we call an `event', while the set of unitaries connecting the events is the `process'. More generally, we consider a scenario where the labels that identify possible events are not necessarily associated with time instants (for example, events might be delocalised in time \cite{Guerin2018, Oreshkov2019timedelocalized} or take place on a nonclassical background \cite{Zych2019}). A single `event location' with label $A$ (also referred to as ``region'', ``party'', or ``laboratory'') is therefore identified with the set of possible operations from an input to an output Hilbert space, $\mathcal{H}^A_I$ and $\mathcal{H}^A_O$, respectively. 

The most general operation \cite{Heinosaari2011} is described by a Completely Positive (CP) trace non-increasing map $\mathcal{M}^A: A_I\rightarrow A_O$, where $A_{I,O} \equiv \mathcal{L}(\mathcal{H}^A_{I,O})$ are the spaces of linear operators over the respective Hilbert spaces. A deterministic operation, i.e., one that happens with probability one, is CP and trace preserving (CPTP): $\tr\circ\mathcal{M}^A= \tr$, where $\tr$ and $\circ$ denote operator trace and function composition, respectively.
A collection of CP maps $\{\mathcal{M}^A_a\}_a$, where $a$ labels the measurement outcome, is called an \emph{instrument} if $\sum_a \mathcal{M}^A_a$ is CPTP, which implies that there is unit probability that at least one of the outcomes will occur. 
It is convenient to represent CP maps as operators, using a version of the Choi-Jamio{\l}kowksi isomorphism \cite{Choi1975, jamio72}, $\mathcal{M}^A\mapsto M^A \in A_I\otimes A_O$:
\begin{equation}
M^A := \left[\sum_{jk} \ketbra{j}{k}\otimes \mathcal{M}^A\left(\ketbra{j}{k}\right)\right]^T,
\label{Choi}
\end{equation}
where $\{\ket{j}\}_j$ is a basis of $\mathcal{H}^A_I$ and $^T$ denotes transposition.

The local validity of quantum mechanics implies that the joint probability for outcomes $a_1, \dots, a_n$ in regions $A^1,\dots,A^n$ is given by \cite{oreshkov12, Shrapnel2017}
\begin{equation}
P(a_1, \dots, a_n) = \tr\left[ W \cdot \left(M_{a_1}^{A^1}\otimes \dots\otimes M_{a_n}^{A_n}  \right)\right],
\label{process matrix}
\end{equation}
where $W\in A^1 \otimes \dots \otimes A^n$ is called the \emph{process matrix} (or simply \emph{process}) and we use the shorthand $A^j\equiv A^j_I\otimes A^j_O$. $W$ encodes all information relevant to the possible events, such as initial state, transformations, and causal relations. Positivity of probabilities (and the possibility to extend the local operations to additional systems) implies that the process matrix is positive semidefinite, $W\geq 0$, while normalisation requires
\begin{equation}
\tr\left[ W \cdot \left(\bar{M}^{A^1}\otimes \dots \otimes\bar{M}^{A^n} \right)\right] = 1
\label{constraints}
\end{equation}
for all CPTP maps $\bar{M}^{A^1}, \dots, \bar{M}^{A^n}$. Crucially, Eq.~\eqref{constraints} imposes on $W$ more constraints than the normalisation condition for states, $\tr \rho =1$. Therefore, whereas all positive semidefinite operators represent states (up to normalisation) the same is not true for process matrices. As we will see, it is this constraint that obstructs general superpositions of processes. More general processes, that do not satisfy the linear constraints implied by Eq.~\eqref{constraints}, can always be realised probabilistically, through postselection \cite{Aharonov2009, oreshkov2016, Silva2017, Milz2018}. However, as discussed earlier, we are only interested in processes that can be realised without postslection. Therefore, in the following, we will always assume processes to be deterministic (and condition \eqref{constraints} to be satisfied).

The key property of the formalism is that it encodes signalling correlations: a choice of instrument in a region can change the marginal probability for measurement outcomes in another region. We say that a process is \emph{causally ordered} if there is a permutation $\sigma$ of $n$ elements such that $A^{\sigma(j)}$ cannot signal to any group of parties $\left\{A^{\sigma(k)}\right\}_k$ with $\sigma(k)<\sigma(j)$.  Conversely, we say that $W$ is \emph{incompatible} with (the causal order induced by) $\sigma$ if it possible to signal from some $A^{\sigma(j)}$ to a group of parties $\left\{A^{\sigma(k)}\right\}_k$, all with $\sigma(k)<\sigma(j)$. 
We say that two processes $W_1$, $W_2$ are \emph{differently ordered} if they are causally ordered relative to permutations $\sigma_1$, $\sigma_2$, but incompatible with $\sigma_2$, $\sigma_1$, respectively. We are interested in superpositions of differently ordered processes.

A related concept is that of \emph{causally nonseparable process}, which formalises the notion of indefinite causal order \cite{oreshkov12, oreshkov15, Wechs2019}. A process is causally separable if it is causally ordered or a probabilistic mixture of causally ordered processes. A superposition of differently ordered processes would necessarily be causally nonseparable (because it would be extremal and not causally ordered). However, the existence of locally classical, causally nonseparable processes \cite{baumeler14, baumeler2017, Baumeler2019, Tobar2020} indicates that the opposite is not true, making causal nonseparability unsuitable to characterise superpositions of causal orders.

\section{Superpositions of pure processes}
Process matrices are a generalisation of density matrices and, as such, cannot be directly superposed. In analogy to states, we can define superpositions for `pure' processes, namely, for rank-1 process matrices, with
$W = \proj{\omega}$
for some \emph{process vector} $\ket{\omega}\in \mathcal{H}^{A^1}_I \otimes\mathcal{H}^{A^1}_O\otimes\dots$ Given two pure process $\proj{\omega_1}$, $\proj{\omega_2}$, we define their superposition through a linear combination $\ket{\omega} = \alpha \ket{\omega_1} + \beta \ket{\omega_2}$, with $\alpha, \beta \in\mathbb{C}$, \emph{if $\proj{\omega}$ is a valid process matrix}. That is to say, if and only if
\begin{align}
\bra{\omega} &\left(\bar{M}^{A^1}\otimes \dots \otimes\bar{M}^{A^n}\right)\ket{\omega} = 1 
\label{pureconstraints}
\\ \nonumber
\forall &\textrm{ CPTP } \bar{M}^{A^1}, \dots , \bar{M}^{A^n}.
\end{align}

If all parties have only input spaces (or, equivalently, if all output spaces are trivial, namely they have dimension 1), the process matrix reduces to a state, and superposition of processes reduces to superposition of states. Another special case is a process that connects a region's output, $A_O$, to another region's input, $B_I$, of equal dimension. In this case, a pure process represents a unitary transformation (see, e.g., Appendix A of Ref.~\cite{araujo15}), for which we use the notation
\begin{equation}
\Ket{U}^{A_OB_I}:=\sum_j \ket{j}^{A_O}\otimes\left( U\ket{j}\right)^{B_I}
\label{doubleket}
\end{equation} 
(where $\{\ket{j}\}_j$ is the same basis that defines the Choi-Jamio{\l}kowski isomorphism).
Because of the linearity of the representation \eqref{doubleket}, a `superposition of unitary processes' reduces to the `superposition of unitaries' discussed earlier: $\alpha\Ket{U_1} + \beta\Ket{U_2} = \Ket{\alpha U_1+\beta U_2}$.

More generally, a pure process defines an \emph{isometry} from all the output to all the input spaces: if all parties prepare a pure state, they all receive a pure state, related to the prepared ones through an isometry.
In the particular case where the total input and output dimensions are equal, the process defines a unitary transformation from outputs to inputs \cite{Araujo2017purification}. We will use the term \emph{unitary process} for such cases\footnote{Ref.~\cite{Araujo2017purification} used the term ``pure process'' for what we call here unitary process.}.

\section{Superpositions of causal orders}
We are now in a position to discuss superpositions of differently ordered processes. Let us first analyse the quantum switch, to see what type of superposition it represents.

There are two different versions relevant to the present discussion. The first is a tripartite process, where $A^1\equiv A$ and $A^2\equiv B$ act on a target system, while a third party, $A^3\equiv F$ (for ``future''), receives the target system resulting from $A$ and $B$'s operations, as well as a control system. In this version, the initial state of the target is set to some fixed state $\ket{\psi}$, while the control is fixed to $\ket{+}=\frac{1}{\sqrt{2}}\left(\ket{0}+\ket{1}\right)$. The resulting process vector is \cite{araujo15}
\begin{multline}
\ket{\omega^{(3)}_{\textrm{switch}}}= \frac{1}{\sqrt{2}}\left(\ket{0}^{F_{\textrm{c}}}\ket{\psi}^{A_I}\Ket{\id}^{A_OB_I}\Ket{\id}^{B_O F_{\textrm{t}}} \right. \\
\left.
+ 
\ket{1}^{F_{\textrm{c}}}\ket{\psi}^{B_I}\Ket{\id}^{B_OA_I}\Ket{\id}^{A_O F_{\textrm{t}}}
\right),
\label{switch3}
\end{multline}
where $F_{\textrm{c}}$ and $F_{\textrm{t}}$ respectively denote the control and target subsystems of $F$. Here, $F$ has only input space, while its output is trivial, as no influence from $F$ to $A$ and $B$ is possible.

Expression \eqref{switch3} manifestly represents a superposition of two differently ordered processes, with the target system going first to $A$ and then $B$ or vice versa.
However, seen as an isometry, the process maps the two-qubit space ($A_OB_O$) to the larger four-qubit space ($A_IB_IF_{\textrm{c}}F_{\textrm{t}}$). This means that the isometry does not describe a `pure transformation', but also the specification of a state. The natural interpretation is that process \eqref{switch3} displays \textit{entanglement} between the control system $F_{\textrm{c}}$ and the causal order of $A$ and $B$; therefore, it does not represent a superposition of causal orders alone. Indeed, if we trace out the control system, the resulting reduced process is causally separable (i.e., a classical mixture of causally ordered ones)~\cite{araujo15, oreshkov15}. 

The second version of the switch features an extra laboratory, $A^0\equiv P$ (for `past'), where both the control and target systems can be prepared (corresponding to subsystems $P_{\textrm{c}}$, $P_{\textrm{t}}$, respectively). As $P$ does not receive any state, its input space can be taken to be trivial. The resulting unitary process is
\begin{multline}
\ket{\omega^{(4)}_{\textrm{switch}}}= \left(\ket{0}^{P_{\textrm{c}}}\ket{0}^{F_{\textrm{c}}}\Ket{\id}^{P_{\textrm{t}}A_I}\Ket{\id}^{A_OB_I}\Ket{\id}^{B_O F_{\textrm{t}}} \right. \\
\left.
+ 
\ket{1}^{P_{\textrm{c}}}\ket{1}^{F_{\textrm{c}}}\Ket{\id}^{P_{\textrm{t}}B_I}\Ket{\id}^{B_OA_I}\Ket{\id}^{A_O F_{\textrm{t}}}
\right).
\label{switch4}
\end{multline}
This process is the sum of two vectors, each responsible for a different causal order. However, the two vectors do not define individually valid processes, as they are not normalised for arbitrary CPTP maps. 
Therefore, process \eqref{switch4} does not represent a superposition of processes with different causal orders, but rather describes \emph{quantum control} of the order of $A$ and $B$. 

As mentioned earlier, one can obtain a superposition of valid processes through postselection: By preparing $P_\mathrm{c}$ in state $\ket{+}$ and postselecting \textbf{$F_\mathrm{c}$} in the same state, process \eqref{switch4} reduces to 
\begin{multline}
\ket{\omega_{\mathrm{post}}}= \frac{1}{2}\left(\Ket{\id}^{P_{\textrm{t}}A_I}\Ket{\id}^{A_OB_I}\Ket{\id}^{B_O F_{\textrm{t}}} \right. \\
\left.
+ 
\Ket{\id}^{P_{\textrm{t}}B_I}\Ket{\id}^{B_OA_I}\Ket{\id}^{A_O F_{\textrm{t}}}
\right).
\label{postselected}
\end{multline}
This, however, is not a valid process vector (nor is it proportional to one), meaning that the probability to get the desired outcome $\ket{+}$ depends on the operations performed at $A$, $B$ (and it can even vanish, if the operations anticommute). Therefore, $\ket{\omega_{\mathrm{post}}}$ is not the  superposition of orders we were looking for either.

The above observations lead us to the following:

\begin{definition}
A process vector $\ket{\omega}$ for parties $P,A^1,\dots,A^n,F$ (where $P$ and $F$ have trivial input and output space, respectively) is a \emph{pure superposition of two causal orders} if 
\begin{itemize}
	\item $\ket{\omega}$ represents a unitary process;
	\item there are two nonvanishing 
complex numbers $\alpha$, $\beta$ and two differently ordered, unitary process vectors $\ket{\omega_1}$, $\ket{\omega_2}$ for the same set of parties as $\ket{\omega}$ such that
\begin{equation}
\ket{\omega} = \alpha \ket{\omega_1} + \beta \ket{\omega_2}.
\label{superpositionorders}
\end{equation}
\end{itemize}
\end{definition}

The presence of a past ($P$) and future ($F$) party in the definition follows from the fact that, in a causally ordered process, there must be an initial party, which cannot receive signals from any other, and a final party, which cannot signal to the rest. If the process is unitary, such parties must have trivial input and output space, respectively. As such parties can only be respectively first and last in any causally ordered process, the simplest candidate superposition of orders has to be four-partite, as in the unitary switch \eqref{switch4}.

Although we do not know if a superposition of causal orders is possible in general, we can rule it out for the broad class of \emph{Markovian} processes. These describe evolution from each region to the next, without any memory carried over through an external environment across different steps. A unitary, Markovian process is simply a sequence of unitaries, as in the example discussed in Sec.~\ref{processsection}. 

A Markovian process matrix is the tensor product of processes describing the individual time evolutions from each region to the next \cite{costa2016, Pollock2018, giarmatzi2018witnessing}. Therefore, a unitary, Markovian process matrix compatible with the causal order $\sigma$ has the form
\begin{equation}
\label{unitarymarkovian}
\ket{\omega}
= \bigotimes_{j=0}^{n} \Ket{U_j}^{A^{\sigma(j)}_O A^{\sigma(j+1)}_I},
\end{equation}
where 
$U_j$ are unitary matrices, we identify $A_O^0\equiv P$, $A_I^{n+1}\equiv F$, and we assume that the permutation $\sigma$ leaves first and last parties unchanged,
 $\sigma(0)=0$, $\sigma(n+1)=n+1$.

Note that, for such a process, each party can always signal to any future one (conditioned on the intermediate parties performing appropriate operations). Therefore, a process of this form compatible with $\sigma$ is necessarily incompatible with any $\sigma'\neq \sigma$.

Our core result is that, for the simplest four-partite case,
a linear combination of two unitary, Markovian processes with different orders cannot satisfy the process vector normalisation constraints, Eq.~\eqref{pureconstraints}. 
In particular, a CPTP map for the output-only space $P$ is an arbitrary density matrix $\rho^P\geq 0$, with $\tr\rho^P=1$ (describing a state preparation), while the only CPTP map for the input-only space $F$ is $\id^{F}$ (describing an arbitrary POVM measurement whose outcome is ignored). For $X=A, B$, a CPTP map is represented by a matrix $\xi^{X_IX_O} \geq 0$ such that $\tr_{X_O}\xi^{X_IX_O}=\id^{X_I}$. The technical statement of our result, proved in the Appendix\footnote{While completing the current manuscript, the author was made aware of related results in Ref.~\cite{Yokojima2020}. In particular, Lemma \ref{lemma} can be deduced from Corollary 5 in Ref.~\cite{Yokojima2020}, although the consequence as a no go theorem for superpositions of orders is not discussed there. The argument and the proof in the present manuscript were previously presented at \cite{manitoulin2017}.}, is then as follows:
\begin{lemma}
\label{lemma}
For every set of unitaries $\left\{U_j, V_j\right\}_{j=1,2,3}$, and complex numbers $\alpha, \beta \neq 0$, it is possible to find matrices $\rho$, $\xi$, $\eta\geq 0$, with $\tr\rho=1$, $\tr_{A_O}\xi^{A_IA_O}=\id^{A_I}$, and $\tr_{B_O}\eta^{B_IB_O}=\id^{B_I}$, such that 
\begin{equation}\label{thesis}
\bra{\omega} \left(\rho^{P}\otimes\xi^{A_IA_O}\otimes\eta^{B_IB_O}\otimes\id^{F}\right)\ket{\omega} \neq 1,
\end{equation}
where $\ket{\omega}$ is defined by Eq.~\eqref{superpositionorders} and
\begin{align}
\label{AB}
\ket{\omega_1} &= \Ket{U_1}^{PA_I}\Ket{U_2}^{A_OB_I}\Ket{U_3}^{B_OF}, \\
\label{BA} 
\ket{\omega_2} &= \Ket{V_1}^{PB_I}\Ket{V_2}^{B_OA_I}\Ket{V_3}^{A_OF}.
\end{align}
\end{lemma}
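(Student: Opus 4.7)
The plan is to reduce the validity condition~\eqref{pureconstraints} to a single scalar constraint on the ``cross amplitude''
\[
C(\rho,\xi,\eta):=\bra{\omega_1}\bigl(\rho\otimes\xi\otimes\eta\otimes\id\bigr)\ket{\omega_2},
\]
and then to exhibit CPTP inputs for which its real part cannot take the required value. Since $\ket{\omega_1}$ and $\ket{\omega_2}$ are each an individually valid Markovian unitary process, the two diagonal terms $\bra{\omega_j}(\,\cdot\,)\ket{\omega_j}$ already equal $1$, so~\eqref{pureconstraints} for $\ket\omega=\alpha\ket{\omega_1}+\beta\ket{\omega_2}$ collapses to
\[
2\,\Re\!\bigl[\bar\alpha\beta\,C(\rho,\xi,\eta)\bigr]=1-|\alpha|^2-|\beta|^2,
\]
which must hold for every CPTP $\rho,\xi,\eta$. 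Because $\alpha\beta\neq 0$, this forces $\Re[\bar\alpha\beta\,C]$ to be a fixed constant across the whole set of CPTP maps.

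I would then compute $C$ on two tractable families. First, with $\rho=\proj{p}$ and $\xi,\eta$ the Choi matrices of unitary channels $W_A,W_B$, the identity $\Ket{U}^{XY}=(\id^X\otimes U^Y)\ket{\Omega}^{XY}$ lets me telescope each Markov chain into a single operator string, giving $C=\bra{p}K(W_A,W_B)\ket{p}$ with
\[
K(W_A,W_B):=U_1^\dagger W_A^\dagger U_2^\dagger W_B^\dagger U_3^\dagger V_3 W_A V_2 W_B V_1.
\]
Since arbitrary density matrices are convex combinations of pure states, the previous constraint upgrades to the operator identity
\[
\bar\alpha\beta\,K(W_A,W_B)+\alpha\bar\beta\,K(W_A,W_B)^\dagger=(1-|\alpha|^2-|\beta|^2)\,\id,
\]
required for all unitaries $W_A,W_B$. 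A second family---$\xi$ or $\eta$ replaced by a ``discard-and-prepare'' channel with pure state $\ket{\phi}$---yields after the same telescoping a \emph{factorised} cross amplitude $C=\bra{p}A\ket{p}\,\bra{\phi}B\ket{\phi}$, where $A,B$ are explicit monomials in the six unitaries $U_j,V_j$.

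The last step is to combine these constraints into a contradiction. A numerical-range argument on the factorised amplitude shows that constancy of $\Re[\bar\alpha\beta\,\bra{p}A\ket{p}\bra{\phi}B\ket{\phi}]$ in $(\ket{p},\ket{\phi})$ cannot hold unless one of $A,B$ is a scalar multiple of $\id$; running this argument for both available factorisations forces rigid relations such as $V_3V_2=b\,U_3$ or $V_2V_1=a\,U_1$. Feeding these back into the operator identity for $K(W_A,W_B)$---either by differentiating it at $W_A=W_B=\id$ in arbitrary Hermitian directions $H_A,H_B$, producing commutator relations of the form $\bar\alpha\beta\,iU_1^\dagger[U_2^\dagger U_3^\dagger V_3,H_A]V_2V_1+\mathrm{h.c.}=0$, or by evaluating at a non-commuting pair such as $W_A=\sigma_x,W_B=\sigma_z$ inside a two-dimensional subspace---produces further relations that, together with the rigid ones, overdetermine the six unitaries and admit no solution with $\alpha,\beta\ne 0$. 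This final bookkeeping---organising the infinite family of CPTP-indexed constraints into a finite algebraic incompatibility valid uniformly for \emph{every} choice of $U_1,\dots,V_3$---is the main technical obstacle I would expect.
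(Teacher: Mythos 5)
Your opening is exactly right and coincides with the paper's: since $\ket{\omega_1}$ and $\ket{\omega_2}$ are individually valid processes, normalisation of $\ket{\omega}$ forces $\mathrm{Re}\left[\alpha^*\beta\,\bra{\omega_1}\chi\ket{\omega_2}\right]$ to be constant over all product CPTP maps, and for a pure preparation and unitary local channels the cross amplitude is precisely the word $\bra{\psi}U_1^{\dag}R^{\dag}U_2^{\dag}S^{\dag}U_3^{\dag}V_3RV_2SV_1\ket{\psi}$ that you call $\bra{p}K(W_A,W_B)\ket{p}$. The gap is in everything after that. The paper's key move---absent from your proposal---is the invertible change of variables $S\mapsto S'=U_3^{\dag}V_3U_2^{\dag}V_1SV_1^{\dag}$, $R\mapsto R'=U_2^{\dag}V_1RV_1^{\dag}U_2V_3^{\dag}U_3V_2^{\dag}$, under which the six-unitary word collapses to $\bra{\psi}VT\ket{\psi}$ with a \emph{single fixed unitary} $V=\ee^{\ii\phi}U_1^{\dag}V_2U_3^{\dag}V_3U_2^{\dag}V_1$ and $T=R^{\dag}S^{\dag}RS$ a group commutator. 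Because the substitution is a bijection of the unitary group, the constraint becomes: $\mathrm{Re}\bra{\psi}VT\ket{\psi}$ is constant over all $\ket{\psi}$ and all commutators $T$, equivalently $TVT+V^{\dag}=\lambda T$. Restricting to a two-dimensional subspace spanned by eigenvectors of $V$ and inserting $T=\id$ and $T=\pm\ii\sigma_j$ (all realisable as commutators, e.g.\ $R=\sigma_y$, $S=(\id\pm\ii\sigma_x)/\sqrt{2}$) rules out every unitary $V$ in a few lines. This one reparametrisation is what turns the ``infinite family of CPTP-indexed constraints'' you worry about into a finite check that is, moreover, uniform in $U_1,\dots,V_3$.

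By contrast, the route you sketch in its place is not a proof. The numerical-range step is asserted rather than established (constancy of $\mathrm{Re}\left[\alpha^*\beta\,\bra{p}A\ket{p}\bra{\phi}B\ket{\phi}\right]$ can degenerate in ways other than $A$ or $B$ being scalar, e.g.\ when the relevant numerical ranges are segments through the origin); the ``rigid relations'' such as $V_3V_2=bU_3$ are not derived; and you explicitly defer the final synthesis as ``the main technical obstacle.'' Since that synthesis is precisely the content of the lemma, the proposal identifies the correct quantity to control but stops short of the argument that controls it.
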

This result rules out the simplest, most intuitive superpositions of causal orders. In particular, it proves that the control system is necessary for the quantum switch to be a valid process. Note that, because each step is unitary, all input and output spaces must have equal dimension\footnote{The unitaries in Eq.~\eqref{AB} imply $d^P=d^{A_I}$, $d^{A_O}=d^{B_I}$, and $d^{B_O}=d^{F}$, while from Eq.~\eqref{BA} we get $d^P=d^{B_I}$, $d^{B_O}=d^{A_I}$, and $d^{A_O}=d^{F}$. Taken together, the two sets of equations imply that all dimensions are the same.}. This is not guaranteed for arbitrarily many parties. As it turns out, we need the additional assumption of equal dimensions to prove our general result:
\begin{theorem}
Consider a set of parties $P,A^1,\dots,A^n,F$, where $P$ has trivial input, $F$ has trivial output, and all nontrivial input and output dimensions are equal. For every pair of differently ordered, unitary, Markovian process vectors $\ket{\omega_1}$, $\ket{\omega_2}$, and complex numbers $\alpha, \beta \neq 0$, the linear combination $\ket{\omega} = \alpha \ket{\omega_1} + \beta \ket{\omega_2}$ is not a valid process vector.
\end{theorem}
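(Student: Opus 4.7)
The plan is to reduce the $n$-party statement to the four-partite case already handled by Lemma~\ref{lemma}. At all intermediate parties other than two carefully chosen ones, I will insert a fixed CPTP map---specifically the identity channel---so that the normalisation sandwich for $\ket{\omega}=\alpha\ket{\omega_1}+\beta\ket{\omega_2}$ collapses to the four-partite normalisation sandwich for a process vector of exactly the form addressed by Lemma~\ref{lemma}. The lemma then supplies CPTP maps $\rho,\xi,\eta$ violating the reduced normalisation, which, together with the fixed identity channels at the remaining parties, exhibits a CPTP tuple violating \eqref{pureconstraints} for the full $\ket{\omega}$.

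First, since the two orders $\sigma_1\neq\sigma_2$ are distinct permutations, there must exist indices $i,j$ with $A^i$ preceding $A^j$ in $\sigma_1$ but $A^j$ preceding $A^i$ in $\sigma_2$. Keep $\id^F$ on $F$ (the only CPTP map there), leave $\rho^P, \xi^{A^i_IA^i_O}, \eta^{A^j_IA^j_O}$ free, and insert at every other party $A^k$ the identity channel with Choi $\Ket{\id}\Bra{\id}^{A^k_IA^k_O}$ (trivially CPTP). The key local identity is that, for two adjacent double-kets in a Markovian chain \eqref{unitarymarkovian},
\begin{equation*}
\Bra{\id}^{A^k_IA^k_O}\!\left(\Ket{U_a}^{X A^k_I}\Ket{U_b}^{A^k_O Y}\right)=\Ket{U_b U_a}^{XY}.
\end{equation*}
Applied separately to $\ket{\omega_1}$ and $\ket{\omega_2}$, and used on both the bra and the ket of the sandwich, this compresses each of the $n-2$ intermediate parties away. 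The result is a four-partite expression $\bra{\omega'}\!\left(\rho^P\otimes\xi\otimes\eta\otimes\id^F\right)\!\ket{\omega'}$ with $\ket{\omega'}=\alpha\ket{\omega'_1}+\beta\ket{\omega'_2}$, where $\ket{\omega'_1}, \ket{\omega'_2}$ are four-partite unitary Markovian process vectors of orders $P,A^i,A^j,F$ and $P,A^j,A^i,F$ respectively, built from compositions of the original unitaries. The equal-dimensions hypothesis is exactly what ensures every composed product $U_b U_a$ is well-defined in both chains. Lemma~\ref{lemma} applied to $\ket{\omega'}$ then supplies $\rho,\xi,\eta$ making this sandwich $\neq 1$, completing the proof.

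The main technical point is the cross term $\bra{\omega_1}(\cdots)\ket{\omega_2}$: because $\ket{\omega_1}$ and $\ket{\omega_2}$ have different wire patterns, one must verify that each identity-channel insertion acts on exactly one pair of double-kets in the bra and one pair in the ket, with no spurious coupling between the two orderings. This is where the Markovian hypothesis---that each process factorises as a pure tensor product of double-kets \eqref{unitarymarkovian}---is indispensable: without it, the chains would not decompose cleanly and the contraction argument would break down. The remaining details (existence of the index pair $(i,j)$ by distinctness of permutations, CPTP property of every inserted map, and the fact that $\ket{\omega'_1}$ and $\ket{\omega'_2}$ remain differently ordered by the choice of $i,j$) are routine.
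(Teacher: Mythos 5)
Your proposal is correct and follows essentially the same route as the paper: the paper likewise picks a pair of parties whose relative order is reversed between $\sigma_1$ and $\sigma_2$, contracts all other parties with the identity channel $\KetBra{\id}$ to obtain reduced four-partite processes $\ket{\tilde{\omega}_\sharp}=\bigl(\bigotimes_{i\neq j,k}\Bra{\id}^{A^i_IA^i_O}\bigr)\ket{\omega_\sharp}$ that remain unitary, Markovian and differently ordered, and then invokes Lemma~\ref{lemma}. Your explicit contraction identity and the remark on why Markovianity (the tensor-product structure of Eq.~\eqref{unitarymarkovian}) is needed for the cross term are correct elaborations of steps the paper leaves implicit.
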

\begin{proof}
We can prove the statement by reducing the general case to the $n=2$ one. Indeed, let $\sigma_1\neq \sigma_2$ be the permutations defining the causal orders of $\ket{\omega_1}$ and $\ket{\omega_2}$. As the two permutations are different, we can find $j\neq k$ such that $\sigma_1(j)< \sigma_1(k)$ and $\sigma_2(k)< \sigma_2(j)$. Inserting identity unitaries in all regions except $j$, $k$, we obtain the bipartite reduced processes \cite{araujo15} $\ket{\tilde{\omega}_{\sharp}}:= \left(\bigotimes_{i\neq j,k}\Bra{\id}^{A^i_IA^i_O} \right)\ket{\omega_{\sharp}}$, with $\ket{\omega_{\sharp}}=\ket{\omega}, \ket{\omega_1},\ket{\omega_2}$. Now, we have $\ket{\tilde{\omega}} = \alpha \ket{\tilde{\omega}_{1}} + \beta \ket{\tilde{\omega}_{2}}$, where $\ket{\tilde{\omega}_{1}}$ and $\ket{\tilde{\omega}_{2}}$ are unitary, Markovian, and differently ordered. Lemma \ref{lemma} then implies that $\ket{\tilde{\omega}}$ cannot be a valid process, which in turn implies that $\ket{\omega}$ cannot be a valid process either.
\end{proof}

\section{Conclusions}
We have seen that superpositions of processes with different causal orders are strongly constrained, if we require that the order of events is not correlated with any additional system. In particular, it is not possible to superpose sequences of equal-dimension unitaries connecting events in different orders---arguably the most natural candidate superposition of causal orders. Relaxing the assumptions in the no-go theorem leads to some open possibilities: it is currently unclear whether it is possible to superpose unitary, differently ordered processes that do not satisfy Markovianity or where the local input and output spaces do not have equal dimensions. Another interesting direction is to consider superpositions of more than two processes. Furthermore, it is an intriguing possibility to develop a notion of `coherence' of causal orders that applies to nonpure processes, possibly generalising corresponding resource theories for states \cite{Theurer2017, Bischof2019}. 

\begin{acknowledgments}
The author thanks Mateus Ara{\'u}jo, Cyril Branciard, {\v C}aslav Brukner, Matthew Palmer, and Magdalena Zych for useful discussions. This work was partially supported through an Australian Research Council Discovery Early Career Researcher Award (DE170100712). The University of Queensland (UQ) acknowledges the Traditional Owners and their custodianship of the lands on which UQ operates.
\end{acknowledgments}

\small


\providecommand{\href}[2]{#2}\begingroup\raggedright\endgroup


\normalsize

\onecolumn\newpage
\appendix

\section{Proof of Lemma \ref{lemma}}
Here we provide the proof of Lemma \ref{lemma}, which we reformulate for convenience:
\begin{lemma*}
For every set of unitaries $\left\{U_j, V_j\right\}_{j=1,2,3}$, and complex numbers $\alpha, \beta \neq 0$, it is possible to find
\begin{align}\label{locals}
\chi:=& \rho^{P}\otimes\xi^{A_IA_O}\otimes\eta^{B_IB_O}\otimes\id^{F},\quad\textrm{with}\\\nonumber
\tr \rho^{P}=&1, \; \tr_{A_O}\xi^{A_IA_O}=\id^{A_I}, \; \tr_{B_O}\eta^{B_IB_O}=\id^{B_I},
\end{align} 
such that 
\begin{equation}\label{thesisA}
\langle \chi \rangle := \bra{\omega} \chi \ket{\omega} \neq 1,
\end{equation}
where $\ket{\omega}=\alpha\ket{\omega_1}+\beta\ket{\omega_2}$ and
\begin{align}
\label{ABA}
\ket{\omega_1} &= \Ket{U_1}^{PA_I}\Ket{U_2}^{A_OB_I}\Ket{U_3}^{B_OF}, \\
\label{BAA} 
\ket{\omega_2} &= \Ket{V_1}^{PB_I}\Ket{V_2}^{B_OA_I}\Ket{V_3}^{A_OF}.
\end{align}
\end{lemma*}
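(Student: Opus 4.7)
The plan is a proof by contradiction. Assume, for the sake of contradiction, that $\langle\chi\rangle:=\bra{\omega}\chi\ket{\omega}=1$ for every $\chi$ of the product form in Eq.~\eqref{locals}. Since each $\ket{\omega_j}$ is individually a valid unitary Markovian process of the form~\eqref{unitarymarkovian}, the ``diagonal'' terms satisfy $\bra{\omega_j}\chi\ket{\omega_j}=1$ for every CPTP $\chi$, by the normalisation condition~\eqref{pureconstraints}. Expanding,
\begin{equation*}
\langle\chi\rangle \;=\; |\alpha|^2 + |\beta|^2 + 2\,\mathrm{Re}\!\left(\alpha^{*}\beta\,\bra{\omega_1}\chi\ket{\omega_2}\right),
\end{equation*}
the assumption forces $\mathrm{Re}(\alpha^{*}\beta\,\bra{\omega_1}\chi\ket{\omega_2})$ to take the constant value $(1-|\alpha|^2-|\beta|^2)/2$ over all allowed $\chi$. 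Because $\alpha^{*}\beta\neq 0$, this means the image of the affine map $\chi\mapsto\bra{\omega_1}\chi\ket{\omega_2}$ is contained in a single real line of $\mathbb{C}$. Exhibiting any choices of $\chi$ whose cross-term values are not collinear is therefore a contradiction.

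I look for such $\chi$ inside the convex subfamily of ``trash-and-replace'' instruments: $\xi=\id^{A_I}\otimes\sigma^{A_O}$ and $\eta=\id^{B_I}\otimes\tau^{B_O}$ with $\rho,\sigma,\tau$ arbitrary density matrices. Each is a legitimate CPTP Choi matrix. Substituting into $\bra{\omega_1}\chi\ket{\omega_2}$, and using $\Ket{U}^{XY}=(\id^{X}\otimes U^{Y})\Ket{\id}^{XY}$ to pull every $U_j,V_k$ out of the double-kets, all six internal indices close into a single loop through $P\to A_I\to A_O\to B_I\to B_O\to F\to P$, and a direct index computation gives
\begin{equation*}
\bra{\omega_1}\chi\ket{\omega_2} \;=\; \tr\!\left(\rho\, B^{T}\sigma\, C^{T}\tau\, A^{T}\right),
\end{equation*}
with $A:=U_1^\dagger V_2$, $B:=U_2^\dagger V_1$, $C:=U_3^\dagger V_3$ unitary. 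Intuitively, the mismatched wiring of $A$ and $B$ between $\ket{\omega_1}$ and $\ket{\omega_2}$ is precisely what turns the cross term into a cycle trace, with $A,B,C$ recording how the two orderings differ along the loop.

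To see that this cycle trace is \emph{not} confined to a real line, fix $\sigma=\ketbra{x}{x}$ and $\tau=\ketbra{y}{y}$ pure and vary $\rho$ over pure states. Then $M:=B^{T}\sigma C^{T}\tau A^{T}$ factors as $\bra{x}C^{T}\ket{y}\,\ket{u}\bra{v}$, with $\ket{u}=B^{T}\ket{x}$ and $\ket{v}=A^{*}\ket{y}$. For generic $\ket{x},\ket{y}$—an open dense condition, since $A,B,C$ are unitary—both $\bra{x}C^{T}\ket{y}\neq 0$ and $\ket{u}\not\propto\ket{v}$. Under these conditions the rank-one operator $\ket{u}\bra{v}$, restricted to $\mathrm{span}(\ket{u},\ket{v})$, is a non-normal $2\times 2$ matrix whose numerical range is a non-degenerate ellipse; consequently the numerical range of $M$ itself is a two-dimensional convex subset of $\mathbb{C}$. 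Since $\{\tr(\rho M):\rho\text{ a state}\}$ equals this numerical range, $\rho$ can be chosen so that $\bra{\omega_1}\chi\ket{\omega_2}$ leaves the forced real line, producing the required contradiction.

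The main technical obstacle is obtaining the cycle-trace identity cleanly: one must correctly propagate the complex conjugates and transposes that arise when pulling each $U_j^\dagger$ through the bra side and each $V_k$ through the ket side of the three double-kets, and verify that the ``wrong'' pairing of parties $A$ and $B$ between $\ket{\omega_1}$ and $\ket{\omega_2}$ fuses all six indices into a single closed loop (rather than two disjoint traces). Once the identity is in place, the final numerical-range argument is elementary and relies only on the fact that a rank-one $2\times 2$ matrix with non-parallel $\ket{u},\ket{v}$ is non-normal.
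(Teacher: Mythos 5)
Your proof is correct, but it takes a genuinely different route from the paper's after the (identical) opening step of reducing normalisation to the constancy of $\mathrm{Re}\left(\alpha^*\beta\bra{\omega_1}\chi\ket{\omega_2}\right)$. You probe the cross term with discard-and-reprepare channels $\xi=\id^{A_I}\otimes\sigma^{A_O}$, $\eta=\id^{B_I}\otimes\tau^{B_O}$; I have checked the resulting cycle-trace identity $\bra{\omega_1}\chi\ket{\omega_2}=\tr\left(\rho\, B^{T}\sigma\, C^{T}\tau\, A^{T}\right)$ with $A=U_1^{\dagger}V_2$, $B=U_2^{\dagger}V_1$, $C=U_3^{\dagger}V_3$, and the six indices do fuse into a single loop as you claim. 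Your endgame is a numerical-range argument: for generic pure $\sigma=\proj{x}$, $\tau=\proj{y}$ the operator $B^{T}\sigma C^{T}\tau A^{T}$ is a nonzero, non-normal rank-one matrix, so $\left\{\tr(\rho M)\right\}$ is a nondegenerate elliptical disk and cannot lie on any line $\mathrm{Re}(\ee^{\ii\phi}z)=\mathrm{const}$. The paper instead probes with a pure preparation and local \emph{unitaries} $R$, $S$, compresses the six fixed unitaries into a single $V$ via the substitutions \eqref{subs1}--\eqref{subs2}, and reduces the problem to the impossibility of $TVT+V^{\dagger}=\lambda T$ for all commutators $T=R^{\dagger}S^{\dagger}RS$, settled by evaluating at $T=\id$ and $T=\pm\ii\sigma_j$. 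Your version buys a more uniform and elementary finish (convexity of the numerical range, no case split on $\lambda$, no restriction to qubit blocks $\sigma\oplus\id$) and shows the obstruction is already visible to non-signalling, entanglement-breaking local operations; the paper's version isolates the algebraic source of the obstruction (non-commutativity of the local unitaries) and identifies explicitly which unitary $V$ would have to satisfy the impossible condition. Two points worth making explicit in a polished write-up: (i) because of the transposition in the paper's Choi convention, "discard and prepare $\sigma$" has Choi matrix $\id\otimes\sigma^{T}$---harmless, since $\sigma^{T}$ ranges over all states and your $\chi$ satisfies the stated constraints in any case; (ii) the genericity claim (a hyperplane union a single ray of bad $\ket{y}$ for each fixed $\ket{x}$) needs local dimension at least $2$, the same implicit nondegeneracy assumption the paper makes when it selects a two-dimensional subspace.
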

As $\ket{\omega_1}$ and $\ket{\omega_2}$ are normalised process vectors, we have $\bra{\omega_1} \chi \ket{\omega_1} = \bra{\omega_2} \chi \ket{\omega_2} =1$ for all $\chi$ of the form \eqref{locals}, so that
\begin{equation}
\langle \chi \rangle = \left|\alpha\right|^2 +  \left|\beta\right|^2 + 2\mathrm{Re}\left(\alpha^*\beta \bra{\omega_1}\chi\ket{\omega_2} \right).
\end{equation}
For $\ket{\omega}$ to be a normalised process, we need $\langle \chi \rangle = 1$ for all $\chi$ of the form \eqref{locals}, which requires $\mathrm{Re}\left(\alpha^*\beta \bra{\omega_1}\chi\ket{\omega_2}\right)$ to be constant over $\chi$. Writing $\alpha^*\beta = \ee^{\ii\phi} \left|\alpha^*\beta \right|$, we obtain that the cross term $\mathrm{Re}\left(\ee^{\ii\phi}\bra{\omega_1}\chi\ket{\omega_2}\right)$ should be constant for all $\chi$.

In order to simplify the cross term, we restrict the CPTP maps to a pure state preparation $\ket{\psi}$ for $P$ and unitaries $R$, $S$ for $A$ and $B$, respectively, so that
\begin{equation}
\chi = \proj{\psi^*}^{P}\otimes \Proj{R^*}^{A_IA_O}\otimes \Proj{S^*}^{B_IB_O}\otimes \id^{F},
\label{unitaries}
\end{equation}
with the notation $\Proj{A}\equiv \KetBra{A}$ and where the complex conjugations in $\ket{\psi^*}$, $R^*$, $S^*$, result from the transposition in our definition of the Choi representation for local operations, Eq.~\eqref{Choi}. After some linear algebra, we arrive at 
\begin{equation}
\bra{\omega_1}\chi\ket{\omega_2} = \bra{\psi}U_1^{\dag}R^{\dag}U_2^{\dag}S^{\dag}U_3^{\dag}V_3 R V_2 S V_1 \ket{\psi}.
\label{complicated}
\end{equation} 
The crucial step of the proof is to simplify this expression through an appropriate choice of local operations. We can do this using the substitutions
\begin{align} \label{subs1}
S\mapsto S' &= U_3^{\dag}V_3U_2^{\dag}V_1SV_1^{\dag}, \\  \label{subs2}
R\mapsto R' &= U_2^{\dag}V_1RV_1^{\dag}U_2V_3^{\dag}U_3V_2^{\dag},
\end{align}
under which the cross term takes the form
\begin{align} 
\label{simpler}
\ee^{\ii\phi}&\bra{\omega_1}  \chi\ket{\omega_2} = \bra{\psi}V T \ket{\psi}, \\ \label{V}
V &= \ee^{\ii\phi} U_1^{\dag}V_2 U_3^{\dag}V_3U_2^{\dag}V_1  , \\ 
\label{T} T&= R^{\dag}S^{\dag} R S.
\end{align}

Now the normalisation of $\ket{\omega}$ reduces to the condition that, given $V$ as in Eq.~\eqref{V}, $\mathrm{Re}\left( \bra{\psi}V T \ket{\psi} \right)$ has to be constant for all unitaries $T$ of the form \eqref{T} and all $\ket{\psi}$. This implies that  $VT + T^{\dag}V^{\dag}$ should be proportional to the identity or, equivalently,
\begin{equation}
TVT + V^{\dag} = \lambda T,
\label{condition}
\end{equation}
where $\lambda$ is a number that depends on $\alpha$ and $\beta$:
\begin{equation}
\lambda = \frac{1-\left|\alpha\right|^2-\left|\beta\right|^2}{2\left|\alpha^*\beta\right|}.
\label{lambda}
\end{equation}
To show that no unitary $V$ can satisfy condition \eqref{condition} for all $R$, $S$, it is sufficient to take block-diagonal unitaries $R$, $S$ of the form $\sigma\oplus\id$, where $\sigma$ acts on a two-dimensional subspace spanned by eigenstates of $V$ (so that the restriction of $V$ to that subspace is also unitary). In other words, it is sufficient to show that condition \eqref{condition} cannot be satisfied by any single-qubit unitary $V$. We will omit the $\oplus\id$ for notational convenience.

Let us consider first the case where $\lambda=0$; that is, a superposition of processes with normalised amplitudes $\left|\alpha\right|^2+\left|\beta\right|^2=1$. Condition \eqref{condition} reduces to
\begin{equation}
TVT + V^{\dag} = 0,
\label{simplercondition}
\end{equation}
which has to hold for all $T$ of the form \eqref{T}. We have to find a set of unitaries $T$ of that form such that a single unitary $V$ cannot satisfy \reftoeq{simplercondition} for all of them. First, choose $R=S=\id$ to set $T=\id$, so that \reftoeq{simplercondition} gives 
\begin{equation}
V + V^{\dag} = 0.
\label{antiself}
\end{equation}
Substituting this into condition \eqref{simplercondition}, we now have
\begin{equation}
TVT = V.
\label{TVT}
\end{equation}
We can see that this is not possible by choosing $R_j$, $S_j$ such that $\ii\sigma_j =T_j= R_j^{\dag}S_j^{\dag} R_j S_j$, with $j=x,y,z$ labelling the three Pauli matrices. (For example, for $j=x$, we can choose $R_x=\sigma_y$, $S_x=\frac{\id+\ii \sigma_x}{\sqrt{2}}$\footnote{The general construction is $R_j=\sigma_k$ (with $k\neq j$) and $S_j=\frac{\id + \ii \sigma_j}{\sqrt{2}}$. This gives $R_jS_j =\frac{\sigma_k - \sum_l\epsilon_{kjl}\sigma_l}{\sqrt{2}} $ and $R^{\dag}_jS^{\dag}_j =\frac{\sigma_k + \sum_l\epsilon_{kjl}\sigma_l}{\sqrt{2}}$. As $(\sigma_k\sigma_j)^2 = -\id$ for $k\neq j$, we get $R_j^{\dag}S_j^{\dag} R_j S_j = \frac{\ii}{2}\left(\sigma_j - \sigma_k\sigma_j\sigma_k\right) = \ii \sigma_j$.}.) Now, $V$ has to satisfy
\begin{equation}
\sigma_j V \sigma_j = -V, \qquad j=x,y,z.
\end{equation}
This can only hold for $V=0$, which is not a unitary. 

Let us now go back to the $\lambda\neq 0$ case. As we have seen above, we can choose $R$ and $S$ such that $T= \ii \sigma_x$. But we can also set $T=-\ii \sigma_x$ by choosing $R=\sigma_y$, $S=\frac{\id-\ii \sigma_x}{\sqrt{2}}$. Thus, we get the two equations
\begin{align}
-\sigma_x & V \sigma_x + V^{\dag} =  i\lambda \sigma_x,\\
-\sigma_x & V \sigma_x + V^{\dag} = - i\lambda \sigma_x.
\end{align}
These can be satisfied simultaneously only for $\lambda=0$, which brings us back to the previous case, concluding the proof. \qed

\end{document}